\newfont{\mycrnotice}{ptmr8t at 7pt}
\newfont{\myconfname}{ptmri8t at 7pt}
\newtheorem{theorem}{Theorem}
\newtheorem{lemma}[theorem]{Lemma}
\newtheorem{definition}{Definition}
\newtheorem{example}{Example}
\def\cont{\mathop{\rm cont}\nolimits}
\def\prim{\mathop{\rm prim}\nolimits}
\def\disc{\mathop{\rm disc}\nolimits}
\def\ldcf{\mathop{\rm ldcf}\nolimits}
\def\coeff{\mathop{\rm coeff}\nolimits}
\def\res{\mathop{\rm res}\nolimits}
\begin{document}

\title{Improving the use of equational constraints in \\ cylindrical algebraic decomposition}

\numberofauthors{3}
\author{
\alignauthor
Matthew England\\
       \affaddr{Coventry University}\\
       \email{\normalsize{\texttt{Matthew.England@coventry.ac.uk}}}
\alignauthor
Russell Bradford\\
       \affaddr{University of Bath}\\
       \email{\normalsize{\texttt{R.J.Bradford@bath.ac.uk}}}
\alignauthor
James H. Davenport\\
       \affaddr{University of Bath}\\
       \email{\normalsize{\texttt{J.H.Davenport@bath.ac.uk}}}
}

\maketitle

\begin{abstract} 
When building a cylindrical algebraic decomposition (CAD) savings can be made in the presence of an equational constraint (EC): an equation logically implied by a formula.  

The present paper is concerned with how to use multiple ECs, propagating those in the input throughout the projection set.  We improve on the approach of McCallum in ISSAC 2001 by using the reduced projection theory to make savings in the lifting phase (both to the polynomials we lift with and the cells lifted over).  
We demonstrate the benefits with worked examples and a complexity analysis.
\end{abstract}

\category{I.1.2}{Symbolic and Algebraic Manipulation}{Algorithms}
[Algebraic algorithms, Analysis of algorithms]
\terms{Algorithms, Experimentation, Theory}
\keywords{cylindrical algebraic decomposition, equational constraint}

\section{Introduction}
\label{SEC:Intro}

A \emph{cylindrical algebraic decomposition} (CAD) splits $\mathbb{R}^n$ into cells arranged \emph{cylindrically}, meaning the projections of any pair are either equal or disjoint, and such that each can be described with a finite sequence of polynomial constraints.  

Introduced by Collins for quantifier elimination in real closed fields, applications of CAD include:
derivation of optimal numerical schemes \cite{EH14},
parametric optimisation \cite{FPM05}, 
epidemic modelling \cite{BENW06}, 
theorem proving \cite{Paulson2012}, 
reasoning with multi-valued functions \cite{DBEW12}, and much more.

CAD has complexity doubly exponential in the number of variables \cite{DH88}.  For some applications there exist algorithms with better complexity (see \cite{BPR06}), but CAD implementations remain the best general purpose approach for many.  This may be due to the many extensions and optimisations of CAD since Collins including: partial CAD (to lift only when necessary for quantifier elimination); symbolic-numeric lifting schemes \cite{Strzebonski2006, IYAY09}; local projection approaches \cite{Brown2013, Strzebonski2014a}; and decompositions via complex space \cite{CMXY09, BCDEMW14}.  Collins original algorithm is described in \cite{ACM84I} while a more detailed summary of recent developments can be found, for example, in \cite{BDEMW15}.

\subsection{CAD computation and terminology}

We describe the computation scheme and terminology that most CAD algorithms share.  We assume a set of input polynomials (possibly derived from formulae) in ordered variables $\bm{x} = x_1\prec \ldots \prec x_n$.  
The \emph{main variable} of a polynomial (${\rm mvar}$) is the greatest variable present under the ordering.  

The first phase of CAD, {\em projection}, applies projection operators repeatedly, each time producing another set of polynomials in one fewer variables.  Together these contain the {\em projection polynomials} used in the second phase, {\em lifting}, to build the CAD incrementally.  First $\mathbb{R}$ is decomposed into cells which are points and intervals according to the real roots of polynomials univariate in $x_1$.  Then $\mathbb{R}^2$ is decomposed by repeating the process over each cell with the bivariate polynomials in ($x_1,x_2)$ evaluated at a sample point.  

This produces {\em sections} (where a polynomial vanishes) and {\em sectors} (the regions between) which together form the {\em stack} over the cell.  Taking the union of these stacks gives the CAD of $\mathbb{R}^2$ and this is repeated until a CAD of $\mathbb{R}^n$ is produced.  

At each stage cells are represented by (at least) a sample point and an {\em index}.  The latter is a list of integers, with the $k$th describing variable $x_k$ according to the ordered real roots of the projection polynomials in $(x_1, \dots, x_k)$.  If the integer is $2i$ the cell is over the $i$th root (counting from low to high) and if $2i+1$ over the interval between the $i$th and $(i+1)$th (or the unbounded intervals at either end). 

The projection operator is chosen so polynomials are {\em delineable} in a cell: the portion of their zero set in the cell consists of disjoint sections.  A set of polynomials are {\em delineable} if each is individually, and the sections of different polynomials  are identical or disjoint.  If all projection polynomials are delineable then the input polynomials must be {\em sign-invariant}: have constant sign in each cell of the CAD.  

\subsection{Equational constraints}

Most applications of CAD require {\em truth-invariance} for logical formulae, meaning each formula has constant boolean truth value on each cell.  Sign-invariance for the polynomials in a formula gives truth-invariance, but we can obtain the latter more efficiently by using equational constraints.

\begin{definition}
We use {\em QFF} to denote a quantifier free Tarski formula: Boolean combinations ($\land,\lor,\neg$) of statements about the signs ($=0,>0,<0$) of integral polynomials.  

An {\em equational constraint} (EC) is a polynomial equation logically implied by a QFF.  If an atom of the formula it is said to be {\em explicit} and is otherwise {\em implicit}.
\end{definition}

Collins first suggested that the projection phase of CAD could be simplified in the presence of an EC \cite{Collins1998}.  He noted that a CAD sign-invariant for the defining polynomial of an EC, and sign-invariant for any others only on sections of that polynomial, would be sufficient.  An intuitive approach to produce this is to consider resultants of the EC polynomial with the other polynomials, in place of them.  This approach was first formalised and verified in \cite{McCallum1999b}.

A recent complexity analysis \cite{BDEMW15} showed that using an EC in this way reduces the double exponent in the complexity bound for CAD by $1$.  A natural question is whether this can be repeated in the presence of multiple ECs.   An algorithm for CAD in the presence of two ECs was detailed in  \cite{McCallum2001}.  The main idea was to observe that the resultant of the polynomials defining two ECs is itself an EC, and so the same ideas could be applied for the second projection as for the first.  However, this approach was complicated as the key result verifying \cite{McCallum1999b} could not be applied recursively.

\subsection{Contribution and plan}

This paper discusses how we can extend the theory of ECs to produce CADs more efficiently.  In Section \ref{SUBSEC:McC} we revise key components of the theory for reduced projection in the presence of an EC from \cite{McCallum1999b, McCallum2001}.  Then in Section \ref{SUBSEC:Lifting} we explain how it can also give reductions in the lifting phase, allowing us to propose and verify a new algorithm in Section \ref{SEC:Alg} for making use of multiple ECs.  
This breaks with the tradition of producing CADs sign-invariant for EC polynomials, instead guaranteeing only invariance for the truth of their conjunction.  
We demonstrate our contributions in Sections \ref{SEC:Example} and \ref{SEC:Complexity} with a worked example and complexity analysis.

All experiments in \textsc{Maple} were conducted using \textsc{Maple} 18.  All code and  data created for this paper is openly available from 
\url{http://dx.doi.org/10.15125/BATH-00071}.

\section{CAD with multiple equational \\ constraints}
\label{SEC:EC}

\subsection{Key theory from \cite{McCallum1998, McCallum1999b, McCallum2001}}
\label{SUBSEC:McC}

We recall some of the key theory behind McCallum's operators.  
Let $\cont$, $\prim$, $\disc$, $\coeff$ and $\ldcf$ denote the content, primitive part, discriminant, coefficients and leading coefficient of polynomials respectively (in each case taken with respect to a given mvar).  Let $\res$ denote the resultant of a pair of polynomials.  When applied to a set of polynomials we interpret these as producing sets of polynomials, e.g.
\[
\res(A)=\left\{\res(f_i,f_j) \, | \, f_i \in A, f_j \in A, f_j \neq f_i \right\}.
\]
Recall that a set $A \subset \mathbb{Z}[{\bf x}]$ is an \emph{irreducible basis} if the elements of $A$ are of positive degree in the mvar, irreducible and pairwise relatively prime.  Throughout this section suppose $B$ is an irreducible basis for a set of polynomials, that every element of $B$ has mvar $x_n$ and that $F \subseteq B$.
Define
\begin{align}
P(B) &:= \coeff(B) \cup \disc(B) \cup \res(B),
\label{eq:P}
\\
P_{F}(B) &:= P(F) \cup \{ {\rm res}(f,g) \mid f \in F, g \in B \setminus F \},
\label{eq:ECProj} 
\end{align}
\begin{align}
P_{F}^{*}(B) &:= P_{F}(B) \cup \res(B \setminus F),
\label{eq:ECProjStar}
\end{align}
as the projection operators introduced respectively in \cite{McCallum1998, McCallum1999b, McCallum2001}.  In the general case with $A$ a set of polynomials and $E \subseteq A$ we proceed with projection by: letting $B$ and $F$ be irreducible basis of the primitive parts of $A$ and $E$ respectively; applying the operators as defined above; and then taking the union of the output with $\cont(A)$.  

The theorems in this section validate the use of these operators for CAD.  They use the condition of \emph{order-invariance}, meaning each polynomial has constant order of vanishing within each cell, which of course implies sign-invariance.  We say that a polynomial with mvar $x_k$ is \emph{nullified} over a cell in $\mathbb{R}^{k-1}$ if it vanishes identically throughout.

\begin{theorem}[\cite{McCallum1998}]
\label{thm:McC1}
Let $S$ be a connected submanifold of $\mathbb{R}^{n-1}$ in which each element of $P(B)$ is order-invariant. 

Then on $S$, each element of $B$ is either nullified or analytic delineable (a variant on delineability, see \cite{McCallum1998}). Further, the sections of $B$ not nullified are pairwise disjoint, and each element of such $B$ is order-invariant on such sections.
\end{theorem}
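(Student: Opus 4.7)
The plan is to establish the three distinct assertions of the theorem in turn: analytic delineability (or nullification) of each individual $f \in B$ on $S$, pairwise disjointness of the sections of distinct non-nullified $f, g \in B$, and order-invariance of each non-nullified $f$ along its sections. All three exploit the hypothesis that every polynomial in $P(B)$ is order-invariant on $S$, which is strictly stronger than sign-invariance and therefore controls identical-vanishing on $S$ more tightly.

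For the first claim, I would fix $f \in B$. Since $\coeff(B) \subseteq P(B)$, each coefficient of $f$ with respect to $x_n$ is order-invariant on $S$, hence either identically zero or nowhere-vanishing on $S$ by connectedness. If all coefficients vanish, $f$ is nullified; otherwise the highest non-vanishing coefficient pins $f$ to a constant degree $d \geq 1$ on $S$. With $\disc(f) \in P(B)$ order-invariant, at any point of $S$ where $\disc(f)$ does not vanish I would invoke the analytic implicit function theorem applied to $f$ and $\partial f/\partial x_n$ to continue the real roots of $f$ locally as analytic functions of $\bm{s} \in S$. Connectedness of $S$ together with order-invariance should promote this to a coherent global picture, yielding analytic delineability.

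For the second claim, distinct $f, g \in B$ are coprime in $\mathbb{Z}[\bm{x}]$ since $B$ is an irreducible basis, so $\res(f,g)$ is a non-zero polynomial, and it is order-invariant on $S$ because it lies in $P(B)$. If $\res(f,g)$ is nowhere-vanishing on $S$, then $f$ and $g$ share no common $x_n$-root over any point of $S$, and disjointness is immediate. For the third claim I would fix a section of a non-nullified $f$, which by delineability is the graph of an analytic function $\xi$ on $S$, and identify the order of vanishing of $f$ at $(\bm{s}, \xi(\bm{s}))$ with the multiplicity of $\xi(\bm{s})$ as a root of $f$ on the fibre; this multiplicity is controlled by the order of vanishing of $\disc(f)$ at $\bm{s}$, so the order-invariance of $\disc(f)$ on $S$ forces constant multiplicity along the section and hence order-invariance of $f$ there.

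The main obstacle throughout will be the \emph{identically-zero} cases: what happens when $\disc(f)$ or $\res(f,g)$ vanishes on all of $S$ rather than nowhere on $S$. Here sign-invariance would leave us unable to rule out pathological coincidences of roots, and one must genuinely use the full strength of order-invariance, presumably via Puiseux-expansion or algebraic-function-theoretic arguments to control the local structure of $f$ over $S$. I expect this is where the bulk of the technical work of \cite{McCallum1998} lies, and the reason that order-invariance (rather than the weaker sign-invariance) is taken as the governing condition throughout the theory.
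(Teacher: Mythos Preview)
The paper does not prove this theorem at all: it is stated purely as a cited result from \cite{McCallum1998}, with no accompanying argument in the present paper. There is therefore no ``paper's own proof'' against which to compare your proposal; the authors simply import the statement wholesale and use it as a black box in the verification of Algorithm~\ref{alg:ECM}.

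That said, your sketch does track the broad architecture of McCallum's original argument: constant-degree via the coefficients, local root-functions via the discriminant and the implicit function theorem, and separation of sections of distinct elements via the pairwise resultants. You have also correctly identified where the genuine difficulty lies. The cases in which $\disc(f)$ or $\res(f,g)$ vanish identically on $S$ are precisely the places where mere sign-invariance is insufficient and order-invariance must be exploited; in McCallum's paper this is handled not by a direct Puiseux argument but by invoking a result essentially due to Zariski on the equimultiplicity of an analytic hypersurface along a submanifold on which the discriminant has constant order. Your third claim in particular is too optimistic as written: the multiplicity of $\xi(\bm{s})$ as a root of $f(\bm{s},x_n)$ is \emph{not} the same as the order of vanishing of $f$ as a function of $n$ variables at $(\bm{s},\xi(\bm{s}))$, and bridging that gap is exactly what the Zariski-type lemma accomplishes. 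So your outline is sound as a roadmap, but the step you flag as ``the bulk of the technical work'' is not merely a matter of care --- it requires a specific algebraic-analytic lemma that your sketch does not supply.
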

Suppose we apply $P$ repeatedly to generate projection polynomials.  Repeated use of Theorem \ref{thm:McC1} concludes that a CAD produced by lifting with respect to these projection polynomial is order-invariant so long as no projection polynomial with mvar $x_k$ is nullified over a cell in the CAD of $\mathbb{R}^{k-1}$ (a condition known as \emph{well-orientedness} which can be checked during lifting).  If this condition is not satisfied then $P$ cannot be used (and we should restart the CAD construction using a different projection operator, such as Hong's \cite{Hong1990}).

\begin{theorem}[\cite{McCallum1999b}]
\label{thm:McC2}
Let $f$ and $g$ be integral polynomials with mvar $x_n$, $r(x_1,\ldots,x_{n-1})$ be their resultant, and suppose $r \neq 0$.
Let $S$ be a connected subset of $\mathbb{R}^{n-1}$ on which $f$ is delineable and $r$ order-invariant.

Then $g$ is sign-invariant in every section of $f$ over $S$.
\end{theorem}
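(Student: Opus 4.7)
The plan is to argue by contradiction, linking a sign change of $g$ along a section of $f$ to a vanishing of $r$, and then exploiting order-invariance of $r$ to force $g$ to vanish identically rather than change sign.

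First I would fix a section $\sigma$ of $f$ over $S$. Delineability of $f$ on $S$ presents $\sigma$ as the graph $\{(s,\theta(s)) : s \in S\}$ of a continuous function $\theta \colon S \to \mathbb{R}$ with $f(s,\theta(s)) = 0$, and connectedness of $S$ passes to $\sigma$. The restriction $\tilde g(s) := g(s,\theta(s))$ is therefore a continuous real-valued function on the connected set $S$. Suppose, for contradiction, that $\tilde g$ is not of constant sign. The intermediate value theorem yields $s_0 \in S$ with $\tilde g(s_0) = 0$. Since $f(s_0, \theta(s_0)) = 0$ as well, the univariate specialisations $f(s_0, x_n)$ and $g(s_0, x_n)$ share the root $\theta(s_0)$, so $r(s_0) = 0$ by the standard property of the resultant. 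Order-invariance on $S$ then forces the order of vanishing of $r$ to be the same positive integer at every point of $S$, so $r$ vanishes identically on $S$.

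The main obstacle, and the crux of the proof, is ruling out the remaining scenario where $r \equiv 0$ on $S$ yet $\tilde g$ genuinely changes sign rather than vanishing identically. I would resolve this with a local analytic argument around $(s_0, \theta(s_0))$: factor $f$ in $x_n$ over a small analytic neighbourhood to isolate the factor whose root traces $\sigma$, and similarly analyse the local factorisation of $g$. The order of vanishing of $r$ at any $s \in S$ admits a local expression as a sum of contributions from pairs of common roots of $f(s,\cdot)$ and $g(s,\cdot)$, weighted by their multiplicities (together with contributions from any vanishing leading coefficients). If $\tilde g(s_0) = 0$ but $\tilde g$ were nonzero at points arbitrarily close to $s_0$ in $S$, the contribution coming from the section $\theta$ would be present at $s_0$ yet absent nearby, so the order of vanishing of $r$ would drop — contradicting its constancy on $S$.

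Hence $\tilde g$ vanishes on a neighbourhood of $s_0$ in $\sigma$; iterating the same local argument along the connected set $S$ gives $\tilde g \equiv 0$ on $\sigma$, which is trivially sign-invariant and contradicts the supposed sign change. The assumption therefore fails, and $g$ is sign-invariant on every section of $f$ over $S$. The delicate part is making the multiplicity bookkeeping in the resultant rigorous; this is where I would expect to lean most heavily on the analytic framework underpinning Theorem~\ref{thm:McC1}.
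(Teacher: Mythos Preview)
The paper does not prove Theorem~\ref{thm:McC2}; it is quoted verbatim from \cite{McCallum1999b} as background and no argument is supplied here, so there is no in-paper proof to compare your proposal against.

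As a standalone sketch your outline is reasonable and in the spirit of McCallum's analytic framework: the reduction to a zero of $\tilde g$ via connectedness and the intermediate value theorem, followed by $r(s_0)=0$ and hence $r\equiv 0$ on $S$ by order-invariance, is the standard opening.  The part you flag as delicate really is the crux.  Your heuristic that a disappearing common root along $\theta$ forces the order of $r$ to drop is not automatic: other root coincidences of $f$ and $g$, or vanishing of leading coefficients, could in principle compensate, keeping the total order constant while $\tilde g$ changes sign.  Turning this into a proof requires a genuine local statement---essentially that near $(s_0,\theta(s_0))$ the contribution of the section $\theta$ to the order of $r$ is isolated from the other contributions---and this is exactly where McCallum invokes the analytic delineability machinery and a careful local factorisation.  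So your plan is on the right track, but the step you identify as needing the framework of Theorem~\ref{thm:McC1} is not merely bookkeeping; it is where the real content lies, and you should expect to import a specific lemma from \cite{McCallum1999b} rather than reconstruct it ad hoc.
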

Suppose $A$ was derived from a formula with EC defined by $E = \{f\}$, and that we apply $P_E(A)$ once and then $P$ repeatedly to generate a set of projection polynomials.  
Assuming the input is well-oriented, we can use Theorem \ref{thm:McC1} to conclude the CAD of $\mathbb{R}^{n-1}$ order invariant for $P_E(A)$.  The CAD of $\mathbb{R}^n$ is then sign-invariant for $E$ using Theorem \ref{thm:McC1} and sign-invariant for $A$ in the sections of $E$ using Theorem \ref{thm:McC2}.  Hence the CAD is truth-invariant for the formula. 

\vspace*{0.1in}

What if there are multiple ECs?  We could designate one for special use and treat the rest as any other constraint (heuristics can help with the choice \cite{BDEW13}).   But this does not gain any more advantage than one EC gives.  However, we cannot simply add multiple polynomials into $E$ at the top level as this would result in a CAD truth-invariant for the disjunction of the ECs, not the conjunction.

Suppose we have a formula with a second EC. If this has a lower mvar then we may consider applying the reduced projection operator again at this lower level.  In fact, even if the second EC is also in the mvar of the system we can \emph{propagate} it to the lower level by noting that the resultant of the two ECs is itself an EC in one fewer variable.  

So we consider applying first the operator $P_E(A)$ where $E$ defines the first EC and then $P_{E'}(A')$ where $A' = P_{E}(A)$ and $E' \subseteq A'$ contains the EC in one variable fewer.  Unfortunately, Theorem \ref{thm:McC2} does not validate this approach.  While it could be applied once for the CAD of $\mathbb{R}^{n-1}$ it cannot then validate the CAD of $\mathbb{R}^{n}$ because the first application of the theorem provided sign-invariance while the second requires the stronger condition of order invariance.  Note however, that this approach is acceptable if $n=3$ (since in two variables the conditions are equivalent for squarefree bases).  

\begin{figure}[t]
\caption{The polynomials from Example \ref{ex:Simple}.}
\label{fig:SimpleEx1}
\includegraphics[scale=0.28]{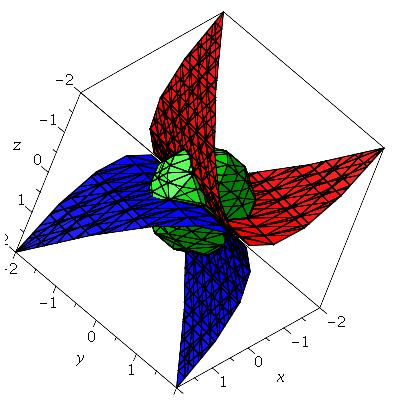}
\includegraphics[scale=0.23]{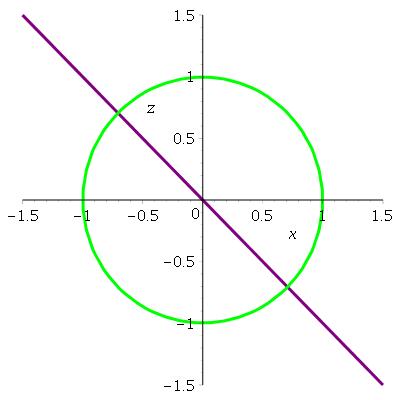}
\vskip-10pt
\end{figure}

\begin{example}
\label{ex:Simple}
The following are graphed in Figure \ref{fig:SimpleEx1}, with $g$ the sphere, $f_1$ the upper surface and $f_2$ the lower:
\[
f_1 = x+y^2+z, \quad f_2 = x-y^2+z, \quad g = x^2+y^2+z^2-1.
\]
We consider the formula $\phi = f_1=0 \land f_2=0 \land g \geq 0$.  
The surfaces $f_1$ and $f_2$ only meet on the plane $y=0$ and this projection is on the right of Figure \ref{fig:SimpleEx1}).  From this it is clear the solution requires $|x|\geq\sqrt{2}/2$ and $z=-x$.

How could this be ascertained using CAD?  With variable ordering $z \succ y \succ x$ a sign-invariant CAD for $(f_1, f_2, g)$ has 1487 cells using \textsc{Qepcad} \cite{Brown2003b}.
We could then test a sample point of each cell to identify the ones where $\phi$ is true.  

It is preferable to use the presence of ECs.  Declaring an EC to \textsc{Qepcad} will ensure it uses the algorithm in \cite{McCallum1999b} based on a single use of $P_E(A)$ followed by $P$.  Either choice results in 289 cells.  In particular, the solution set is described using 8 cells: all have $y=0, z=-x$ but the $x$-coordinate unnecessarily splits cells at $\frac{1}{2}(1\pm\sqrt{6})$.  This is identified due to the projection polynomial $d = \disc_y(\res_z(f_i, g))$.

If we declare both ECs to \textsc{Qepcad} then it will use the algorithm in \cite{McCallum2001} applying $P_E(A)$ twice (allowed since $n=3$) to produce a CAD with 133 cells.  The solution set is now described using only 4 cells (the minimum possible).  Note that $d$ was no longer produced as a projection polynomial.
\end{example}
 
For problems with $n>3$ it is still possible to make use of multiple ECs.  However, we must include the extra information necessary to provide order-invariance of the non-EC polynomials in the sections of ECs.  The following theorem may be used to conclude that $P_{E}^{*}(A)$ is appropriate.
 
\begin{theorem}[\cite{McCallum2001}]
\label{thm:McC3}
Let $f$ and $g$ be integral polynomials with mvar $x_n$, $r = \res(f,g)$,  $d=disc(g)$, and suppose $r,d \neq 0$.
Let $S$ be a connected subset of $\mathbb{R}^{n-1}$ on which $f$ is analytic delineable, $g$ is not nullified and $r$ and $d$ are order-invariant.

Then $g$ is order-invariant in each section of $f$ over $S$.
\end{theorem}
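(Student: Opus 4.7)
My plan is to separate the conclusion into the sign-invariance already delivered by Theorem \ref{thm:McC2} plus a residual order-invariance analysis that only needs to be carried out on those sections of $f$ where $g$ vanishes identically. The hypothesis on $\disc(g)$ is used precisely to control the geometry of the zero set of $g$ over $S$.

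Step 1: I would apply Theorem \ref{thm:McC1} to (the squarefree part of) $g$. After reducing to the irreducible factors $g_1,\ldots,g_k$ of $g$ and noting that $\disc(g)$ factors, up to contents and unit squares, in terms of the $\disc(g_i)$ and the pairwise $\res(g_i,g_j)$, the order-invariance of $\disc(g)$ together with the hypothesis that $g$ is not nullified transfers to order-invariance of each $\disc(g_i)$ and each $\res(g_i,g_j)$ and non-nullification of each $g_i$. Theorem \ref{thm:McC1} then yields that $g$ is analytically delineable over $S$, that its sections are pairwise disjoint analytic graphs, and that $g$ itself is order-invariant on each such section.

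Step 2: I would fix an arbitrary section $\sigma_f$ of $f$ over $S$. The hypotheses of Theorem \ref{thm:McC2} are satisfied here (analytic delineability of $f$ implies delineability of $f$, and $r$ is order-invariant with $r\neq 0$), so $g$ is sign-invariant on $\sigma_f$. I then split into two cases. If $g$ has constant nonzero sign on $\sigma_f$, it is nowhere zero there, hence of order $0$ throughout, so order-invariance is immediate. If instead $g\equiv 0$ on $\sigma_f$, then the analytic graph $\sigma_f$ is a connected subset of the zero set of $g$ lying over $S$; by Step 1 this zero set is the disjoint union of the analytic sections of $g$ over $S$; by connectedness $\sigma_f$ must coincide with exactly one section $\sigma_g^{(j)}$; and the order-invariance of $g$ along $\sigma_g^{(j)}$, established in Step 1, transfers directly to $\sigma_f$.

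The main obstacle is Step 1: transferring order-invariance of $\disc(g)$ to the joint conditions on the irreducible factors of $g$ required by Theorem \ref{thm:McC1} is essentially a factorisation bookkeeping exercise, but it has to be done carefully (using non-nullification of each factor, which follows from non-nullification of $g$, and the fact that order-invariance of a product of analytic functions, none of which vanishes identically on a connected manifold, transfers to each factor). Once Step 1 is in place, Step 2 is a short connectedness observation, since the delineability established in Step 1 rules out $\sigma_f$ weaving between distinct sections of $g$.
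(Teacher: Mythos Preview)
The paper does not prove Theorem~\ref{thm:McC3}; it is quoted from \cite{McCallum2001} and used as a black box, so there is no in-paper argument to compare against. What remains is to assess your proposal on its own terms.

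Your Step~1 has a genuine gap. Theorem~\ref{thm:McC1} requires every element of $P(B)=\coeff(B)\cup\disc(B)\cup\res(B)$ to be order-invariant on $S$, but the hypotheses of Theorem~\ref{thm:McC3} give order-invariance only of $\disc(g)$ and $\res(f,g)$; nothing is assumed about the coefficients of $g$, in particular $\ldcf(g)$. This is not mere bookkeeping: without coefficient control the degree of $g$ can drop over $S$ and delineability can fail outright. Concretely, with $n=2$ take $f=x_2$, $g=x_1x_2^{2}+x_2+1$ and $S=(-1,\tfrac{1}{8})\subset\mathbb{R}$. Then $r=\res_{x_2}(f,g)=1$ and $d=\disc_{x_2}(g)=1-4x_1$ are both nonvanishing on $S$ and hence order-invariant there, $f$ is analytically delineable on $S$, and $g$ is nowhere nullified; yet $g$ has two real roots for every $x_1\in S\setminus\{0\}$ and only one root at $x_1=0$, so $g$ is not delineable over $S$. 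The conclusion of your Step~1 is therefore false under exactly the hypotheses of Theorem~\ref{thm:McC3}, even though the theorem itself holds in this example (the unique section of $f$ is $x_2=0$, on which $g\equiv 1$).

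Step~2 is sound \emph{given} Step~1, but Step~1 cannot be obtained from Theorem~\ref{thm:McC1} without the missing coefficient hypotheses. McCallum's own argument in \cite{McCallum2001} does not attempt to establish global delineability of $g$ over all of $S$; it works locally on a fixed section of $f$, using the order-invariance of $d$ (together with that of $r$) to control how roots of $g$ behave near that section. To repair your approach you would need either to import that local analysis, or to add --- and justify from the surrounding algorithm rather than from the theorem statement --- order-invariance of the coefficients of $g$ on $S$.
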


Suppose we have a formula with two ECs, one with mvar $x_n$ and the other with mvar $x_{n-1}$.  The second could be explicit in the formula or implicit (a resultant as described earlier).  Theorem \ref{thm:McC3} allows us to use a reduced operator twice.  We first calculate $A' = P_E(A)$ where $E$ contains the defining polynomial of the first EC, and then $P_{E'}^{*}(A')$ where $E'$ contains the defining polynomial of the other.  Subsequent projections simply use $P$.  When lifting we use Theorem \ref{thm:McC1} to verify the CAD of $\mathbb{R}^{n-2}$ as order-invariant for $P_{E'}^{*}(A')$; Theorem \ref{thm:McC1} to verify the CAD of $\mathbb{R}^{n-1}$ order-invariant for $E'$ everywhere and Theorem \ref{thm:McC3} to verify it order-invariant for $A'$ in the sections of $E'$; and Theorem \ref{thm:McC1} and \ref{thm:McC2} to verify the CAD of $\mathbb{R}^n$ order-invariant for $E$ and sign-invariant for $A$ in those cells that are both sections of $E$ and $E'$.

\subsection{Reductions in the lifting phase}
\label{SUBSEC:Lifting}

The main contribution of the present paper is to realise that the theorems above also allow for significant savings in the lifting phase of CAD.  However, to implement these we must discard two embedded principles of CAD:
\begin{enumerate}
\item That the projection polynomials are a fixed set.
\item That the invariance structure of the final CAD can be expressed in terms of sign-invariance of polynomials.
\end{enumerate}
Abandoning the first is key to recent work in \cite{CMXY09, BCDEMW14}, while the second was also investigated in \cite{BM05, MB09}.

\subsubsection{Minimising the polynomials when lifting}
\label{SUBSEC:IL-Poly}

Consider Theorem \ref{thm:McC2}: it allows us to conclude that $g$ is sign-invariant in the sections of $f$ produced over a CAD of $\mathbb{R}^{n-1}$ order-invariant for $P_{\{f\}}(\{f,g\})$.  Therefore, it is sufficient to perform the final lift with respect to $f$ only (decompose cylinders according to the roots of $f$ but not $g$).  The decomposition imposes sign-invariance for $f$ while Theorem \ref{thm:McC2} guarantees it for $g$ in the cells where it matters.  

\begin{example}
\label{ex:Simple2}
We return to Example \ref{ex:Simple}.  Recall that designating either EC and using \cite{McCallum1999b} produced a CAD with 289 cells.  If we follow this approach but lift only with respect to the designated EC at the final step (implemented in our \textsc{Maple} package \cite{EWBD14}) we obtain a CAD with 141 cells.  
\end{example}

This improved lifting follows from the theorems in \cite{McCallum1999b}, but was only noticed 15 years later during the generalisation of \cite{McCallum1999b} to the case of multiple formulae in \cite{BDEMW13, BDEMW15}.  Experiments there demonstrated its importance, particularly for problems with many constraints (see Section 8.3 of \cite{BDEMW15}).  

When we apply a reduced operator at two levels then we can make such reductions at both the corresponding lifts.

\begin{example}
\label{ex:Simple3}
We return to the problem from Example \ref{ex:Simple}.  Set $A = \{f_1, f_2, g\}$ and $E=\{f_1\}$.  Then project out $z$ using
\[
P_E(A) = \{ y^2, y^4+2xy^2+2x^2+y^2-1 \}.
\]
These are the resultants of $f_1$ with $f_2$ and $g$.  The discriminant of $f_1$ was a constant and so could be discarded, as was its leading coefficient (meaning no further coefficients were required).
We set $A' = P_E(A)$, $E' = \res_z(f_1,f_2) = y^2$ and 
\[
R = \res_y(y^2, y^4+2xy^2+2x^2+y^2-1) = (2x^2-1)^2.
\]
We have $P_{E'}(A') = \{ R \}$ since the other possible entries (the discriminants and coefficients from $E'$) are all constants.  We hence build a 5 cell CAD of the real line with respect to the two real roots of $R$.  We then lift above each cell with respect to $y^2$ only, in each case splitting the cylinder into three cells about $y=0$, to give a CAD of $\mathbb{R}^2$ with 15 cells.  

Finally, we lift over each of these 15 cells with respect to $f_1$ to give 45 cells of $\mathbb{R}^3$.  This compares to 133 from \textsc{Qepcad}, which used reduced projection but then lifted with all projection polynomials.  
No polynomials were nullified, so using Theorems \ref{thm:McC1} and \ref{thm:McC2}, the output is truth-invariant for $\phi$.
\end{example}

The additional lifting that \textsc{Qepcad} performed does not provide any further structure.  For example, if we had lifted with respect to $f_2$ at the final stage in Example \ref{ex:Simple3} then we would be doing so without the knowledge that it is delineable.  Hence splitting the cylinder at the sample point offers no guarantee that the cells produced are sign-invariant away from that point.  So the extra work does not allow us to conclude that $f_2$ is sign-invariant (except on sections of $f_1$).  

Note that using fewer projection polynomials for lifting not only decreases output size (and computation time) but also the risk of failure from non well-oriented input: we only need worry about nullification of polynomials we lift with.

\subsubsection{Minimising the cells for stack generation}
\label{SUBSEC:IL-Stack}

We can achieve still more savings from the theory in Section \ref{SUBSEC:McC} by abandoning the aim of producing a CAD sign-invariant with respect to any polynomial, instead insisting only on truth-invariance for the formula.  We may then lift trivially to cylinders over cells already known to be false, only identifying sections of projection polynomials if there is a possibility the formula may be true.  The idea of avoiding computations over false cells was presented in \cite{Seidl2006}.  Our contribution is to explain how such cells can easily be identified in the presence of ECs.
We demonstrate with our example.

\begin{example}
\label{ex:Simple4}
Return to the problem from Examples~\ref{ex:Simple}~$-$~\ref{ex:Simple3} and in particular the CAD of $\mathbb{R}^2$ produced with 15 cells in Example \ref{ex:Simple3}.  On 5 of these 15 cells the polynomial $R$ is zero and on the others it is either positive or negative throughout.  

Now, $\phi$ can only be satisfied above the 5 cells, as elsewhere the two EC defining polynomials cannot share a root and thus vanish together.  We can already conclude the truth value for the 10 cells (false) and thus we do not need to lift over them, except in the trivial sense of extending them to a cylinder in $\mathbb{R}^3$.  Lifting over the 5 cells where $R=0$ with respect to $f_1$ gives 15 cells, which combined with the 10 cylinders gives a CAD of $\mathbb{R}^3$ with 25 cells that is truth-invariant for $\phi$.
\end{example}

This 25 cell CAD is not sign-invariant for $f_1$.  The cylinders above the 10 cells in $\mathbb{R}^2$ where $R \neq 0$ may have $f_1$ varying sign, but since $f_1$ can never equal zero at the same time as $f_2$ in these cells it does not affect the truth of $\phi$.  

Identifying the 5 cells where $R=0$ in the CAD of $\mathbb{R}^2$ was trivial since they are simply the sections of the second lift, and hence those cells with second entry even in the cell index.  Those sections produced in the third lift are similarly all cells where $f_1$ is zero, however, we cannot conclude that $f_2$ is also zero on these.  Theorem \ref{thm:McC2} only guarantees that $f_2$ is sign-invariant on such cells, so to determine those signs we must still evaluate the polynomials at the sample point. 

Reducing the number of cells for stack generation clearly decreases output size, and since the cells can be identified using only a parity check on an integer, computation time decreases also.  As with the improvements in Section \ref{SUBSEC:IL-Poly}, this also decreases the risk of non well-oriented input: we only need worry about nullification over these identified cells.

\section{Algorithm}
\label{SEC:Alg}

We present Algorithm \ref{alg:ECM} to build a truth-invariant CAD for a formula in the presence of multiple ECs.  We assume that the ECs are already identified as input to the algorithm (they may have been first computed through propagation as described in Section \ref{SEC:EC}).  We assume further that each EC is primitive, and that all the ECs have different mvar (so in practice a choice of designation may have been made).   

\begin{algorithm}[t!]\label{alg:ECM}
\caption{CAD using multiple ECs}
\SetKwInOut{Input}{Input}\SetKwInOut{Output}{Output}
\Input{A formula $\phi$ in variables $x_1,\ldots,x_n$, and a sequence of sets $\{E_k\}_{k=1}^n$; each either empty or containing a single primitive polynomial with mvar $x_k$ which defines an EC for $\phi$.
}
\Output{Either: $\mathcal{D}$, a truth-invariant CAD of $\mathbb{R}^n$ for $\phi$ (described by lists $I$ and $S$
   of cell indices and sample points); or 
{\bf FAIL}, if not well-oriented.
}
\BlankLine
Extract from $\phi$ the set of defining polynomials $A_n$\label{step:Pstart}\;
\For{$k = n, \dots, 2$}{
  Set $B_k$ to the finest squarefree basis for ${\rm prim}(A_k)$\;
  Set $C$ to $\cont(A_k)$\;
  Set $F_k$ to the finest squarefree basis for $E_k$\;
  \eIf{$F_k$ is empty}{
    Set $A_{k-1} := C \cup P(B_k)$\label{step:P}\;
  }{
    \eIf{$k=n$ or $k=2$}{Set $A_{k-1} := C \cup P_{F_i}(B_i)$\;\label{step:PEA}}
    {Set $A_{k-1} := C \cup P_{F_i}^{*}(B_i)$\;\label{step:Pend}}
  }
}
If $E_1$ is not empty then set $p$ to be its element; otherwise set $p$ to the product of polynomials in $A_1$\label{step:Bstart}\;  
Build $\mathcal{D}_1 := (I_1,S_1)$ according to the real roots of~$p$\;
\If{$n=1$}{
\Return $\mathcal{D}_1$\;\label{step:Bend}
}
\For{$k=2, \dots, n$\label{step:Lstart}}{
  Initialise $\mathcal{D}_k = (I_k, S_k)$ with $I_k$ and $S_k$ empty sets\;
  \eIf{$F_{k}$ is empty\label{step:L0}}{
    Set $L:=B_k$\;\label{step:L1}
  }{
  Set $L:=F_k$\;\label{step:L2}
  } 
  \eIf{$E_{k-1}$ is empty\label{step:C0}}{
    Set $\mathcal{C}_a := \mathcal{D}_{k-1}$ and $\mathcal{C}_b$ empty\label{step:C1a}\;
  }{
    Set $\mathcal{C}_a$ to be cells in $\mathcal{D}_{k-1}$ with $I_{k-1}[-1]$ even\label{step:C1b}\;
    Set $\mathcal{C}_b := \mathcal{D}_{k-1} \setminus \mathcal{C}_a$\label{step:C2}\;
  }
  \For{each cell $c \in \mathcal{C}_a$}{
    \If{An element of $L$ is nullified over $c$\label{step:WO}}
    {\Return FAIL\;\label{step:fail}
    }
    Generate a stack over $c$ with respect to the polynomials in $L$, adding cell indices and sample points to $I_k$ and $S_k$\label{step:lift}\;
  }
  \For{each cell $c \in \mathcal{C}_b$}{
    Extend to a single cell in $\mathbb{R}^k$ (cylinder over $c$), adding index and sample point to $I_k$ and $S_k$\label{step:Lend}\;
  }
}
\Return $\mathcal{D}_n = (I_n, S_n)$.
\end{algorithm}

Steps \ref{step:Pstart} $-$ \ref{step:Pend} run the projection phase of the algorithm.  
Each projection starts by identifying contents and primitive parts.  
When there is no declared EC ($E_i$ is empty) the projection operator (\ref{eq:P}) is used (step \ref{step:P}).  Otherwise the operator (\ref{eq:ECProjStar}) is used (step \ref{step:Pend}), unless it is the very first or very last projection (step \ref{step:PEA}) when we use (\ref{eq:ECProj}).  In each case the output of the projection operator is combined with the contents to form the next layer of projection polynomials.  

Steps \ref{step:Bstart} $-$ \ref{step:Bend} construct a CAD for the real line (and return it if the input was univariate).  This is sometimes referred to in the literature as the \emph{base phase}.  If there is a declared EC in the smallest variable then the real line is decomposed according to its roots, otherwise according to the roots of all the univariate projection polynomials.

Steps \ref{step:Lstart} $-$ \ref{step:Lend} run the lifting phase, incrementally building CADs of $\mathbb{R}^k$ for $k=2, \dots, n$.  For each $k$ there are two considerations.  First, whether there is a declared EC with mvar $x_k$.  If so we lift only with respect to this (step \ref{step:L2}) and if not we use all projection polynomials with mvar $x_k$ (step \ref{step:L1}).  
Second, whether there is a declared EC with mvar $x_{k-1}$.  If so we restrict stack generation to those cells where the EC was satisfied.  These are simply those with $I_{k-1}[-1]$ (last entry in the cell index) even (step \ref{step:C1b}).  We lift the other cells trivially to a cylinder in step \ref{step:Lend}.

Algorithm \ref{alg:ECM} clearly terminates.  We will verify that it produces a truth-invariant CAD for the formula so long as the input is well-oriented, as defined below.

\begin{definition}
\label{def:WO}
For $k=2, \dots, n$ define sets:
\begin{itemizeshort}
\item $L_k$ $-$ the \emph{lifting polynomials}: the defining polynomial of the declared EC with mvar $x_k$ if one exists, or all projection polynomials with mvar $x_k$ otherwise.
\item $\mathcal{C}_k$ $-$ the \emph{lifting cells}: those cells in the CAD of $\mathbb{R}^{k-1}$ in which the designated EC with mvar $x_{k-1}$ vanishes if it exists, and all cells in that CAD otherwise. 
\end{itemizeshort}
The input of Algorithm \ref{alg:ECM} is \emph{well-oriented} if for $k=2, \dots, n$ no element of $L_k$ is nullified over an element of $\mathcal{C}_k$.
\end{definition}

\begin{theorem}
\label{thm:Alg}
Algorithm \ref{alg:ECM} satisfies its specification.
\end{theorem}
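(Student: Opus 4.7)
The plan is to split the argument into three parts: termination and the FAIL branch, validity of the projection sets, and an inductive verification of truth-invariance.

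Termination is immediate since every loop runs a bounded number of iterations. If the algorithm exits with FAIL at step \ref{step:fail}, it has found an element of $L_k$ nullified over a cell in $\mathcal{C}_k$, which by Definition \ref{def:WO} contradicts well-orientedness; so failure occurs only on non-well-oriented input. From this point I would assume well-oriented input and aim to establish truth-invariance of the output CAD.

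The first substantive step is to check that the projection phase (steps \ref{step:Pstart}--\ref{step:Pend}) produces sets satisfying the hypotheses of the theorems of Section \ref{SUBSEC:McC}. Each iteration extracts contents and primitive parts and forms a squarefree basis $B_k$, so every call of $P$, $P_{F_k}$ or $P_{F_k}^{*}$ sees admissible data. The split between $P_{F_k}$ (used at $k=n$ and $k=2$) and $P_{F_k}^{*}$ (used for intermediate $k$) mirrors the discussion after Theorem \ref{thm:McC2}: when the output polynomials are themselves to be projected further with respect to a new EC we need order-invariance of the non-EC polynomials on sections of the EC, which requires the stronger $P^{*}$ justified by Theorem \ref{thm:McC3}; at $k=n$ there is no prior projection to validate, and at $k=2$ the target CAD is on the real line, where sign-invariance and order-invariance coincide for squarefree bases, so the cheaper $P_{F_k}$ suffices.

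The heart of the proof is an induction on $k=1,\ldots,n$ establishing the following invariant for $\mathcal{D}_k$: every cell produced by stack generation (step \ref{step:lift}) over a cell in $\mathcal{C}_k$ is order-invariant (sign-invariant when $k=n$) for $A_k$, and above every cell produced by trivial extension (step \ref{step:Lend}) the formula $\phi$ is identically false. The base case $k=1$ is immediate from the root-based construction in steps \ref{step:Bstart}--\ref{step:Bend}. For the inductive step, fix $c\in\mathcal{C}_k$: the inductive hypothesis supplies the invariance on $c$, and well-orientedness rules out nullification of any element of $L_k$. If $E_k=\emptyset$ we stack with $B_k$ and apply Theorem \ref{thm:McC1}; if $E_k\neq\emptyset$ we stack only with $F_k$, combining Theorem \ref{thm:McC1} on $F_k$ with Theorem \ref{thm:McC2} (if $k=n$) or Theorem \ref{thm:McC3} (intermediate $k$) to obtain invariance of $B_k\setminus F_k$ on the sections of $F_k$. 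For $c\in\mathcal{D}_{k-1}\setminus\mathcal{C}_k$, the declared EC with mvar $x_{k-1}$ fails to vanish on $c$, so no point above $c$ can satisfy $\phi$ and trivial extension is sound. Taking $k=n$ yields the required truth-invariant CAD.

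The step I expect to demand the most care is this last point: when $E_{k-1}$ was obtained by propagation through resultants rather than being an atom of $\phi$, the claim that $\phi$ is false above any cell $c$ on which the propagated EC is non-zero must be justified by unwinding the resultant chain back to the original ECs, using the fact that $\res_{x_k}(f,g)\neq 0$ on $c$ forces $f$ and $g$ to share no common root above $c$. Making this implication airtight across several levels of propagation, in coordination with the choice of $P$ versus $P^{*}$ at the matching projection steps, is the main obstacle I would expect.
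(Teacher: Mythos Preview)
Your overall plan mirrors the paper's: handle FAIL via Definition \ref{def:WO}, then induct on $k$ invoking Theorems \ref{thm:McC1}--\ref{thm:McC3} according to which projection operator was used. But the inductive invariant you state cannot be established. You assert that every cell produced by stack generation over $c\in\mathcal{C}_k$ is order-invariant for \emph{all} of $A_k$; when $E_k\neq\emptyset$ you lift only with $F_k$, so on the \emph{sectors} of $F_k$ the polynomials in $B_k\setminus F_k$ need not be sign-invariant at all---Theorems \ref{thm:McC2} and \ref{thm:McC3} guarantee invariance only on the \emph{sections}. Your own inductive step concedes this (``invariance of $B_k\setminus F_k$ on the sections of $F_k$''), so what you prove contradicts what you claim. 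The paper repairs this by splitting the invariant $I(k)$ into two clauses: (a) the EC is order-invariant over all admissible cells; (b) the remaining projection polynomials are order- (sign- if $k=n$) invariant only on sections of the EC. This two-part formulation is precisely what lets the induction close, since $\mathcal{C}_{k+1}$ consists of those sections. You also skip the case $E_k\neq\emptyset$ with $k=2<n$: there $P_{F_2}$ rather than $P_{F_2}^*$ was used, so Theorem \ref{thm:McC3} is unavailable; the paper applies Theorem \ref{thm:McC2} and upgrades sign- to order-invariance by noting the polynomials form a squarefree basis in two variables.

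Your closing worry about propagated ECs is misplaced. The algorithm's input specification already requires each nonempty $E_k$ to contain a polynomial that \emph{defines an EC for $\phi$}, i.e.\ $\phi$ logically implies its vanishing. Hence on any cell where that polynomial is nonzero, $\phi$ is false by definition; no unwinding of resultant chains is needed inside the correctness proof, because the burden of verifying that a propagated resultant really is an EC lies with whoever prepares the input, not with Algorithm \ref{alg:ECM}.
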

\begin{proof}
We must show the CAD is truth-invariant for $\phi$, unless the input is not well-oriented when FAIL is returned.

First consider the case where $n=1$.  The projection phase would not run, with the algorithm jumping to the CAD construction in step \ref{step:Bstart}, returning the output in step \ref{step:Bend}.  If there was no declared EC then the CAD is sign-invariant for all polynomials defining $\phi$ and thus every cell is truth invariant for $\phi$.  If there was a declared EC then the output is sign-invariant for its defining polynomial.  Cells would either be intervals where the formula must be false; or points, where the EC is satisfied, and the formula either identically true or false depending on the signs of the other polynomials.

Next suppose that the input were not well-oriented (Definition \ref{def:WO}).  For a fixed $k$, the conditional in steps \ref{step:L0} $-$ \ref{step:L2} sets the lifting polynomials $L_k$ to $L$ and the conditional in steps \ref{step:C0} $-$ \ref{step:C2} the lifting cells $\mathcal{C}_{k}$ to $\mathcal{C}_a$.  Thus it is exactly the conditions of Definition \ref{def:WO} which are checked by step \ref{step:WO}, returning FAIL in step \ref{step:fail} when they are not satisfied.  If the lifting phase completes then the input is well-oriented.

From now on we suppose $n>1$ and the input is well-oriented.  
For a fixed $k$ define \emph{admissible} cells to be those in the induced CAD of $\mathbb{R}^{k-1}$ where all declared ECs with mvar smaller than $x_k$ are satisfied, or to be all cells in that induced CAD if there are no such ECs.  Then let $I(k)$ be the following statement for the CADs produced by Algorithm \ref{alg:ECM}. \emph{Over admissible cells (in $\mathbb{R}^{k-1}$) the CAD of $\mathbb{R}^k$ is:
\begin{enumerate}[(a)]
\item order-invariant for any EC with mvar $x_k$; \vspace*{-0.1in}
\item order- (sign- if $k=n$) invariant  for all projection polynomials with mvar $x_k$ on sections of the EC over admissible cells, or over all admissible cells if no EC exists.
\end{enumerate}
}
We have already proved \noindent $I(1)$, and $I(n)$ may be proved by induction.  To assert the truth of $I(k)$ we note the following:
\begin{itemizeshort}
\item When $E_k$ is empty we use Theorem \ref{thm:McC1} to assert all projection polynomials with mvar $x_k$ are order-invariant in the stacks over admissible cells giving (a) and (b).  
\item When $E_k$ is not empty and $k=2$ we used the projection operator (\ref{eq:ECProj}).  Theorem \ref{thm:McC2} allows us to conclude (b) and that the EC is sign-invariant in admissible cells.  The stronger property of order-invariance follows automatically since the lifting polynomials form a squarefree basis in two variables.
\item  When $E_k$ is not empty and $k=n$ we used the projection operator (\ref{eq:ECProj}). Theorem \ref{thm:McC2} allows us to conclude (b), but also (a) since in the case $k=n$ the statement requires only sign-invariance.
\item When $E_k$ is not empty and $2<k<n$ we used the projection operator (\ref{eq:ECProjStar}).   Theorem \ref{thm:McC3} then allows us to conclude the statement.
\end{itemizeshort}
In each case the assumptions of the theorems are met by the inductive hypothesis, exactly over admissible cells as defined according to whether $E_{k-1}$ was empty or not.

From the definition of admissible cells, we know that $\phi$ is false (and thus trivially truth invariant) upon all cells in the CAD of $\mathbb{R}^n$ built over an inadmissible cell of $\mathbb{R}^k$, $k<n$.  Coupled with the truth of (a) for $k=1, \dots, n$, this implies the CAD of $\mathbb{R}^n$ is truth-invariant for the conjunction of ECs (although it may not be truth-invariant for any one individually).  The truth of (b) implies that on those cells where all ECs are satisfied, the other polynomials in $\phi$ are sign-invariant and thus $\phi$ is truth-invariant.
\end{proof}

\section{Worked example}
\label{SEC:Example}

Assume variable ordering $z \succ y \succ x \succ u \succ v$ and define  
\begin{align*}
&f_1 := x-y+z^2, \quad f_2 := z^2-u^2+v^2-1, \quad g := x^2-1, \\
&f_3 := x+y+z^2, \quad f_4 := z^2+u^2-v^2-1, \quad h := z.
\end{align*}
We consider the formula
\begin{align*}
\phi = f_1=0 \land f_2=0
\land f_3=0 \land f_4=0 \land g \geq 0 \land h \geq 0.
\end{align*}
The solution can be found manually by decomposing the system into blocks.  The surfaces $f_1$ and $f_3$ are graphed in $(x,y,z)$-space on the left of Figure \ref{fig:WE1}.  They meet only on the plane $y=0$ and this projection is shown on the right.  The surfaces $f_2$ and $f_4$ are graphed in $(z,u,v)$-space on the left of Figure \ref{fig:WE2} and meet only when $z=\pm 1$.  We consider only $z=+1$ due to $h \geq 0$, with this projection plotted on the right.  We thus see that the solution set is given by 
\[
\{ u=\pm v, x = -1, y = 0, z = 1\}.
\] 

To ascertain this by Algorithm \ref{alg:ECM} we must first propagate and designate ECs.  We choose to use $f_1$ first, calculate 
\[
\res_z(f_1, f_2) = (-u^2+v^2-x+y-1)^2
\]
and assign $r_1$ to be the square root: the defining polynomial for an EC with mvar $y$.   Similarly consider
\begin{align*}
&\res_y \big( r_1, \res_z(f_1,f_3) \big) = 16(u^2-v^2+x+1)^4, \\
&\res_y \big( r_1, \res_z(f_1,f_4) \big) = 4(u^2-v^2)^2
\end{align*}
and assign $r_2 := u^2-v^2+x+1$, $r_3 := u^2-v^2$: defining polynomials for ECs with mvar $x$ and $u$ respectively.  There is no series of resultants that leads to an EC with mvar $u$.   
We hence have
$
\{E_j\}_{k=1}^n := \{f_1\}, \{r_1\}, \{r_2\}, \{r_3\}, \{ \, \}
$
as input for Algorithm \ref{alg:ECM}, along with $\phi$.

\begin{figure}[t]
\caption{The polynomials $f_1$ and $f_3$ from Section \ref{SEC:Example}.}
\label{fig:WE1}
\includegraphics[scale=0.28]{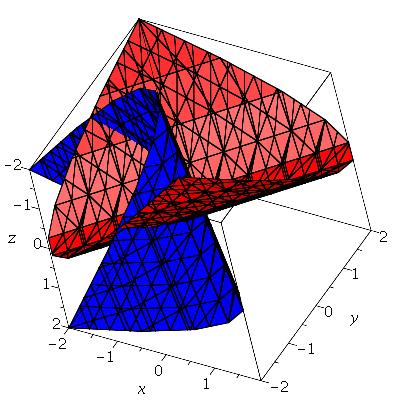}
\includegraphics[scale=0.23]{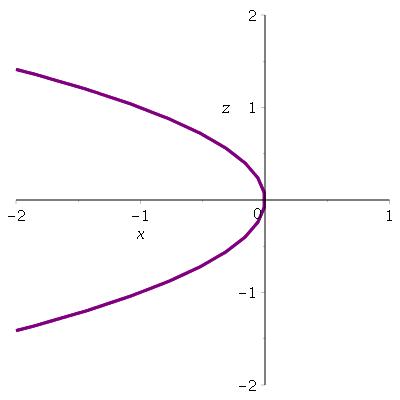}
\vskip-10pt
\end{figure}

\begin{figure}[t]
\caption{The polynomials $f_2$ and $f_4$ from Section \ref{SEC:Example}.}
\label{fig:WE2}
\includegraphics[scale=0.28]{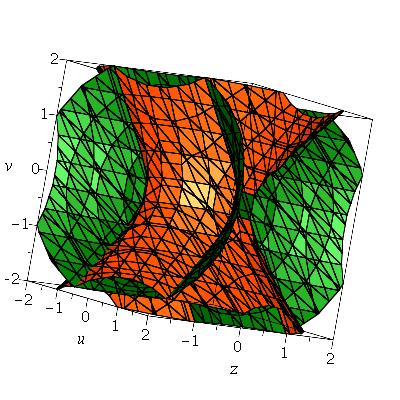}
\includegraphics[scale=0.22]{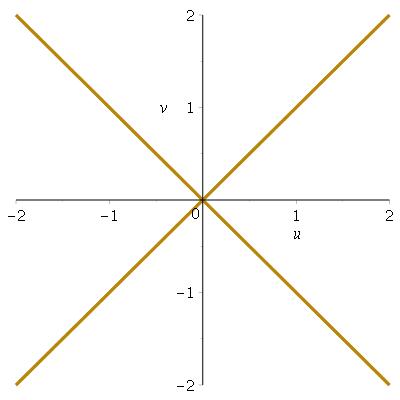}
\vskip-10pt
\end{figure}

The algorithm starts by extracting the defining polynomials $A_5 = \{f_1, f_2, f_3, f_4, g, h\}$ and finds $B_5=A_5, F_5=E_5$ (in fact $F_i=E_i$ for all $i=1, \dots, 5$).  There is a declared EC for the first projection so we use the operator (\ref{eq:ECProj}) to derive
\begin{align*}
A_4 := P_{F_5}(B_5) &= \{(x^2-1)^2, (-u^2+v^2-x+y-1)^2, \\
&\quad (u^2-v^2-x+y-1)^2, 4y^2, x-y\}.
\end{align*}
Hence $C := \{x^2-1\}$ and
\begin{align*}
B_4 &:= \{ y, y-x, -u^2+v^2-x+y-1, u^2-v^2-x+y-1 \}.
\end{align*}
For the next projection we must use operator (\ref{eq:ECProjStar}), giving
\[
A_3 := C \cup P_{F_{4}}^*(B_4) = \{x^2-1, u^2-v^2+x+1, u^2-v^2, u^2-v^2+1\}
\]
noting that for this example the extra discriminants in (\ref{eq:ECProjStar}) all evaluated to constants and so could be discarded.  Then
\[
B_3 :=  \{x^2-1, u^2-v^2+x+1\}, \quad C := \{u^2-v^2, u^2-v^2+1\},
\]
and the next projection also uses (\ref{eq:ECProjStar}) to produce
\[
A_2 := \{u^2-v^2, u^2-v^2+1, u^4-2u^2v^2+v^4+2u^2-2v^2\}.
\]
For the final projection there is no EC and so we use operator (\ref{eq:P}) to find
$A_1 := \{v^2\}.$ 
The base phase of the algorithm hence produces a 3-cell CAD of the real line isolating $0$.  

For the first lift we have $L = \{u^2-v^2\}$ and $C_a$ containing all 3 cells.  Above the two intervals we split into 5 cells by the curves $u=\pm v$, while above $v=0$ we split into three cells about the origin.
From these 13 cells of $\mathbb{R}^2$ we select the 5 which were sections of $u^2-v^2$ for $C_a$.  These are lifted with respect to $L = \{r_2\}$, and the other 8 are simply extended to cylinders in $\mathbb{R}^3$.  Together this gives a CAD of $\mathbb{R}^3$ with 23 cells.  The next two lifts are similar, producing first a CAD of $\mathbb{R}^4$ with 53 cells and finally a CAD of $\mathbb{R}^5$ with 113 cells.  The entire calculation takes less than a second in \textsc{Maple}.

\subsubsection*{Choice in EC designation}

Algorithm \ref{alg:ECM} could have been initialised with alternative EC designations.  There were the 4 explicit ECs with mvar $z$, and by taking repeated resultants we discover the following implicit ECs, organised in sets with decreasing mvar:
\begin{align*}
&\{y^2, u^2-v^2+x-y+1, -u^2+v^2+x-y+1, 
\\
&\hspace*{0.25in} u^2-v^2+x+y+1, -u^2+v^2+x+y+1 \},
\\
&\{x+1, -u^2+v^2+x+1, u^2-v^2+x+1\}, \quad 
\{u^2-v^2\}.
\end{align*}
There are hence 60 possible permutations of EC designation, but they lead to only 3 different outputs, with  113, 103 and 93 cells.  
Heuristics for other questions of CAD problem formulation \cite{DSS04, BDEW13, HEWDPB14, WEBD14} could likely be adapted to assist here.  We note that 93 cells is not a minimal truth invariant CAD for $\phi$ as it splits the CAD of $\mathbb{R}^1$ at $v=0$ (identified from the discriminant of the only EC with mvar $u$).  

\subsubsection*{Comparison with other CAD implementations}

A sign-invariant CAD of $\mathbb{R}^5$ for the 6 polynomials in the example could be produced by \textsc{Qepcad} with 1,118,205 cells.  Neither the \texttt{RegularChains} Library in \textsc{Maple} \cite{CMXY09} nor our \textsc{Maple} package \cite{EWBD14} could produce one in under an hour.  

Our implementation of \cite{McCallum1999b}, which uses operator (\ref{eq:ECProj}) once but also performs the final lift with respect to the EC only, can produce a CAD with either 3023, 10935 or 48299 (twice) cells depending on which EC is designated.  The \textsc{Qepcad} implementation of \cite{McCallum1999b} gives 11961, 30233, 158475, or 158451 cells.  Comparing these sets of figures we see the dramatic improvements from just a single reduced lift.

Allowing \textsc{Qepcad} to propagate the 4 ECs (so a similar projection phase as Algorithm \ref{alg:ECM} but then a normal CAD lifting phase) produces a CAD with 21079 cells.  By declaring only a subset of the 4 (which presumably changes the designations of implicit ECs) a CAD with 5633 cells can be produced, still much more than using Algorithm \ref{alg:ECM}.

The \texttt{RegularChains} Library can also make use of multiple ECs, as detailed in \cite{BCDEMW14}.  The version in \textsc{Maple} 18 times out after an hour, however, with the development version a CAD can be produced instantly.  There are choices (with analogies to designation \cite{EBCDMW14}) but they all lead to a 137 cell output.  In particular, they all have an induced CAD of the real line which splits at $v = \pm 1$ as well as $v=0$.

\section{Complexity analysis}
\label{SEC:Complexity}

We build on recent work in \cite{BDEMW15} to measure the dominant term in bounds on the number of CAD cells produced.  Numerous studies have shown this to be closely correlated to the computation time.  We assume input with $m$ polynomials of maximum degree $d$ in any one of $n$ variables.

\begin{definition}
\label{def:md}
Consider a set of polynomials $p_j$.  The \emph{combined degree} of the set is the maximum degree (taken with respect to each variable) of the product of all the polynomials in the set:
$
\textstyle \max_{i} ( \deg_{x_i}( \, \prod_j p_j \, )).
$

The set has the \emph{(m,d)-property} if it may be partitioned into $m$ subsets, each with maximum combined degree $d$.
\end{definition}
For example,  
$
\{y^2-x, y^2+1\}
$
has combined degree $4$ and thus the $(1,4)$-property, but also the $(2,2)$-property.

This property (introduced in McCallum's thesis) can measure growth in the projection phase.  In \cite{BDEMW15} we proved that if $A$ has the $(m,d)$-property then $P(A) \cup \cont(A)$ has the $(M,2d^2)$-property with $M = \left \lfloor \tfrac{1}{2}(m+1)^2 \right\rfloor$. When $m>1$, we can bound $M$ by $m^2$ (but we need $2m^2$ to cover $m=1$).

If $A$ has the $(m,d)$-property then so does its squarefree basis.  Hence applying this result recursively (as in Table \ref{tab:P}) measures the growth in $(m,d)$-property during projection under operator (\ref{eq:P}).  After the first projection there are multiple polynomials and so the tighter bound for $M$ is used. 

The number of real roots in a set with the $(m,d)$-property is at most $md$.
The number of cells in the CAD of $\mathbb{R}^1$ is thus bounded by twice the product of the final two entries, plus 1.  Similarly, the total number of cells in the CAD of $\mathbb{R}^n$ by
\begin{align}
\label{eq:BoundP}
(2md+1)\textstyle \prod_{r=1}^{n-1} 
\left[ 2 \left( 2^{2^{r-1}}m^{2^r} \right)\left( 2^{2^{r}-1}d^{2^r} \right) + 1 \right].
\end{align}
Omitting the $+1$s will leave us with the dominant term of the bound, which evaluates to give the following result.

\begin{theorem}
\label{thm:Complexity1}
The dominant term in the bound on the number of CAD cells in $\mathbb{R}^n$ produced using (\ref{eq:P}) is
\begin{align}
&\qquad (2d)^{2^{n}-1}m^{2^{n}-1}2^{2^{n-1}-1}.
\label{eq:DominantTerm:P}
\end{align}
\end{theorem}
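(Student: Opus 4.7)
The plan is essentially to extract the leading monomial from the explicit product (\ref{eq:BoundP}) that has just been derived; the theorem is an algebraic simplification rather than a new argument. Before computing, I would justify the stated policy of discarding every ``$+1$'' in (\ref{eq:BoundP}): when the product is expanded, each factor has the form (large monomial in $m$, $d$, $2$) plus $1$, and the monomial obtained by selecting the large part in every factor strictly dominates (in combined degree in $m\cdot d$) every other monomial produced by swapping some factor's large part for its $1$. Hence that single monomial is the unique dominant term, and it is what needs to be computed.

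Having done this, the bound reduces to
\[
2md \prod_{r=1}^{n-1} 2 \cdot 2^{2^{r-1}} \cdot 2^{2^{r}-1} \cdot m^{2^{r}} \cdot d^{2^{r}}.
\]
Collecting exponents, the power of $m$ is $1+\sum_{r=1}^{n-1} 2^{r} = 2^{n}-1$ by the standard geometric sum, and the power of $d$ is the same by symmetry. For the power of $2$ I would note that each factor in the product contributes $1+2^{r-1}+(2^{r}-1) = 2^{r-1}+2^{r} = 3\cdot 2^{r-1}$, so the product contributes $\sum_{r=1}^{n-1} 3\cdot 2^{r-1} = 3(2^{n-1}-1)$; adding the leading $2$ from the prefactor $2md$ gives a total exponent of $3\cdot 2^{n-1}-2$.

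The final step is cosmetic bookkeeping: using $3\cdot 2^{n-1}-2 = (2^{n}-1)+(2^{n-1}-1)$, I would split $2^{3\cdot 2^{n-1}-2} = 2^{2^{n}-1}\cdot 2^{2^{n-1}-1}$ and absorb the first factor into $d^{2^{n}-1}$ to form $(2d)^{2^{n}-1}$, leaving the residual $2^{2^{n-1}-1}$. This is exactly the claimed dominant term $(2d)^{2^{n}-1}m^{2^{n}-1}2^{2^{n-1}-1}$. There is no genuine obstacle in this proof: the only points demanding care are the identity $2^{r-1}+2^{r}=3\cdot 2^{r-1}$ (which controls the doubly exponential growth of the power of $2$) and the justification for discarding the $+1$s, both of which are routine once stated.
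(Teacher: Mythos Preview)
Your proposal is correct and follows exactly the approach the paper takes: the paper simply states that omitting the $+1$s in (\ref{eq:BoundP}) and evaluating the resulting product yields the dominant term, and you have supplied the explicit exponent bookkeeping that carries this out. One tiny wording issue: where you say ``adding the leading $2$ from the prefactor $2md$'' you mean adding $1$ to the exponent (since $2md$ contributes $2^{1}$), and indeed $3(2^{n-1}-1)+1=3\cdot 2^{n-1}-2$ as you state; otherwise the computation is clean and matches the paper.
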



From now on assume $\ell$ ECs, $0< \ell \leq \min(m,n)$, all with different mvar.  For simplicity we assume these variables are $x_n, \dots, x_{n-\ell+1}$ (the first $\ell$ projections are reduced).
\begin{lemma}
\label{lem:Complexity}
Suppose $A$ is a set with the $(m,d)$-property and $E \subset A$ has the $(1,d)$-property.  Then $\cont(A) \cup P_{E}^*(A)$ has the $(2m,2d^2)$-property.
\end{lemma}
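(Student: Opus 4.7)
The plan is to adapt the combined-degree bookkeeping from the proof in \cite{BDEMW15} of the $(\lfloor(m+1)^2/2\rfloor,2d^2)$-property for $\cont(A)\cup P(A)$, capitalising on the fact that $P_E^*$ omits $\coeff(A\setminus E)$ and $\disc(A\setminus E)$ entirely. I begin with the set-theoretic decomposition
\[
\cont(A)\cup P_E^*(A) \;=\; \bigl(\cont(E)\cup P(E)\bigr)\cup T,
\]
where $T := \cont(A\setminus E)\cup\{\res(f,g):f\in E,\,g\in A\setminus E\}\cup\res(A\setminus E)$. Applying the cited result to $E$ (which has the $(1,d)$-property, giving $M=2$) immediately yields the $(2,2d^2)$-property for the first bracket, so it suffices to partition $T$ into $2m-2$ subsets of combined degree at most $2d^2$.

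For $T$, I fix a partition $A=A_1\cup\cdots\cup A_m$ witnessing the $(m,d)$-property of $A$ and set $A'_i:=A_i\setminus E$, so each $A'_i$ still has combined degree $\le d$. The standard resultant-degree bound $\deg_x(\res(p,q))\le\deg_x(p)\deg_{x_n}(q)+\deg_{x_n}(p)\deg_x(q)$, combined with the multiplicativity identity $\prod_{f\in S,\,g\in U}\res(f,g)=\res(\prod_S f,\prod_U g)$, shows that each atomic piece appearing in $T$---namely $\{\res(f,g):f\in E,g\in A'_i\}$, $\{\res(f,g):f\in A'_i,g\in A'_j\}$ for $i<j$, and $\res(A'_i)$---has combined degree $\le 2d^2$, while $\cont(A'_i)$ has combined degree $\le d$ and can be absorbed into any adjacent bundle without breaching the ceiling. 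The task thus reduces to packing these atomic pieces into $2m-2$ bundles.

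I plan to obtain this packing by replaying the pairing scheme from \cite{BDEMW15} applied to the augmented $(m+1)$-block partition $\{E,A'_1,\ldots,A'_m\}$ of $A$, but declaring empty all bundles that would ordinarily contain $\coeff(A'_i)$ or $\disc(A'_i)$ (since $P_E^*$ excludes these). These are precisely the bundles whose removal collapses the quadratic $\lfloor(m+2)^2/2\rfloor$ count to a linear one. The main obstacle lies in verifying that the surviving off-diagonal bundles---the $m$ $E$-cross resultants, the $\binom{m}{2}$ pairwise cross-block resultants, and the $m$ within-block resultants---can be merged via the multiplicativity identity into exactly $2m-2$ groups without exceeding the combined-degree ceiling $2d^2$. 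I expect this amortisation to follow by carefully retracing the inductive packing from the original proof, exploiting that every surviving bundle involves either $E$ or at most two of the $A'_i$'s, so the combined degree of any chained union stays controlled; verifying the constant $2m-2$ rather than a slightly larger linear function is where the bookkeeping is most delicate.
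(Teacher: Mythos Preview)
Your decomposition rests on the claim that $P_E^*$ ``omits $\disc(A\setminus E)$ entirely'' and instead contributes $\res(A\setminus E)$. That is the opposite of how the paper actually uses the operator: the paper's own proof of this lemma speaks of ``the extra $m-1$ discriminants required by operator (\ref{eq:ECProjStar})'', the worked example in Section~\ref{SEC:Example} refers to ``the extra discriminants in (\ref{eq:ECProjStar})'', and Theorem~\ref{thm:McC3} (which is what validates $P_E^*$) requires order-invariance of $\disc(g)$, not of any resultant among the non-EC polynomials. The displayed formula (\ref{eq:ECProjStar}) evidently contains a typo, and you have built on it. Under your reading the packing you propose is not merely delicate but impossible: each cross-block bundle $\{\res(f,g):f\in A'_i,\,g\in A'_j\}$ already sits at the combined-degree ceiling $2d^2$, and the multiplicativity identity you invoke shows that merging any two of them pushes the combined degree to $4d^2$. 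That forces at least $\binom{m}{2}$ bundles from $\res(A\setminus E)$ alone, which exceeds your target of $2m-2$ for $T$ once $m\ge 5$, before you even account for the $E$-cross resultants or the within-block pieces.

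The paper's argument is much shorter and does no new packing at all. It quotes directly from \cite{BDEMW15} that $\cont(A)\cup P_E(A)$ already has the $(\lfloor(3m+1)/2\rfloor,\,2d^2)$-property, so your $\cont(E)\cup P(E)$, the $E$-cross resultants, and $\cont(A\setminus E)$ are all absorbed in that single citation. The only content $P_E^*$ adds over $P_E$ is then the $m-1$ discriminants of the non-EC blocks, each of degree at most $d(d-1)$; pairing them yields $\lceil(m-1)/2\rceil$ further bundles of combined degree at most $2d(d-1)<2d^2$, and the identity $\lfloor(3m+1)/2\rfloor+\lceil(m-1)/2\rceil=2m$ finishes the proof.
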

\begin{proof}
In \cite{BDEMW15} we proved that applying $P_E(A) \cup \cont(A)$ gives a set of $\left \lfloor \tfrac{1}{2}(3m+1) \right\rfloor$ polynomials of combined degree $2d^2$.  
The extra $m-1$ discriminants required by operator (\ref{eq:ECProjStar}) will each have degree at most $d(d-1)$, so pairing them we have $\lceil \tfrac{1}{2}(m-1)\rceil$ sets of combined degree at most $2d^2$.  Then
\[
\left \lfloor \tfrac{1}{2}(3m+1) \right\rfloor + \lceil \tfrac{1}{2}(m-1)\rceil
= m + \left \lfloor \tfrac{1}{2}(m+1) \right\rfloor + \lfloor \tfrac{m}{2}\rfloor
\]
and since $m \in \mathbb{Z}$ this always equals $2m$.
\end{proof}
We apply this recursively in the top half of Table \ref{tab:P2}, with the bottom derived via the process for $P$, as in Table \ref{tab:P}.  

Define $d_i$ and $m_i$ as the entries in the Number and Degree columns of Table \ref{tab:P2} from the row with $i$ Variables. 
We can bound the number of real roots of projection polynomials in $i$ variables by $m_id_i$.  If we lifted with respect to all these projection polynomials, the cell count would be bounded by
\begin{align}
&\textstyle\prod_{i=1}^n \left[2m_id_i + 1\right] 
= \prod_{s=0}^{\ell} \left[ 2 \left( 2^{s}m 2^{2^{s}-1}d^{2^{s}} \right) + 1 \right] 
\nonumber \\
&\qquad \cdot \textstyle \prod_{r=1}^{n-\ell-1} \left[ 
2 \left(  2^{2^{r}\ell}m^{2^{r}} 2^{2^{\ell+r}-1}d^{2^{\ell+r}} \right) + 1 
\right]. \label{eq:boundA}
\end{align}
Omitting the $+1$ from each product allows us to calculate the dominant term of the bound explicitly as
\begin{align}
(2d)^{2^{n}-1} m^{2^{n-\ell} + \ell-1} 2^{\ell 2^{n-\ell} + \ell(\ell-3)/2}.
\label{eq:DominantTerm:P2}
\end{align}


Now we consider the benefit of improved lifting.  
Start by considering the CAD of $\mathbb{R}^{n-(\ell+1)}$.  There can be no reduced lifting until this point and so the cell count bound is given by the second product in (\ref{eq:boundA}), which we will denote by $\dagger$.
The lift to $\mathbb{R}^{n-\ell}$ will involve stack generation over all cells, but only with respect to the EC.  
This can have at most $d_{n-\ell}$ real roots and so the CAD at most $[2d_{n-\ell}+1](\dagger)$ cells.

The next lift, to $\mathbb{R}^{n-\ell-1}$, will lift the sections with respect to the EC, and the  sectors only trivially (to produce the same number of cylinders).  Hence the cell count bound is
$
[2d_{n-(\ell-1)}+1]d_{n-\ell}(\dagger) + (d_{n-\ell}+1)(\dagger)
$
with dominant term $2d_{n-(\ell-1)}d_{n-\ell}(\dagger)$.  Subsequent lifts follow the same pattern and so $2d_nd_{n-1} \dots d_{n-(\ell-1)}d_{n-\ell}(\dagger)$ is the dominant term in the bound for $\mathbb{R}^n$.  This evaluates to give the following result.

\begin{theorem}
\label{thm:Complexity2}
Consider the CAD of $\mathbb{R}^n$ produced using Algorithm \ref{alg:ECM} in the presence of ECs in the top $\ell$ variables.  The dominant term in the bound on the number of cells is
\begin{align}
&\quad 2 \textstyle \prod_{s=0}^{\ell} \left[ 2^{2^{s}-1}d^{2^{s}}  \right] 
\prod_{r=1}^{n-\ell-1} \left[ 
2 \left(  2^{2^{r}\ell}m^{2^{r}} 2^{2^{\ell+r}-1}d^{2^{\ell+r}} \right)  
\right] \nonumber \\
&= (2d)^{2^{n}-1} m^{2^{n-\ell} -2} 2^{\ell 2^{n-\ell} -3\ell}.
\label{eq:DominantTerm:P3}
\end{align}
\end{theorem}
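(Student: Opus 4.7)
The plan is to unwind the dominant expression $2\,d_n d_{n-1}\cdots d_{n-\ell}\cdot(\dagger)$ already isolated in the paragraph just before the theorem, and to expand it using the closed-form entries of Table \ref{tab:P2}. The bound separates cleanly into two pieces: the reduced-lift factor $2\prod_{s=0}^{\ell}d_{n-s}$ coming from the top $\ell+1$ lifts, and the preceding sign-invariant bound $\dagger=\prod_{r=1}^{n-\ell-1}\bigl[2\,(2^{2^{r}\ell}m^{2^{r}})(2^{2^{\ell+r}-1}d^{2^{\ell+r}})\bigr]$, which is just the second product of (\ref{eq:boundA}) with the lower-order ``$+1$'' summands dropped. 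Substituting $d_{n-s}=2^{2^{s}-1}d^{2^{s}}$ into the reduced-lift factor yields the first equality of (\ref{eq:DominantTerm:P3}) immediately.

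The second equality is then a matter of collecting exponents, which I would track separately for $d$, $m$, and $2$. For $d$: summing $2^{s}$ over the top block and $2^{\ell+r}$ over $\dagger$ telescopes to $(2^{\ell+1}-1)+(2^{n}-2^{\ell+1})=2^{n}-1$, exactly the exponent absorbed into $(2d)^{2^{n}-1}$. For $m$: only $\dagger$ contributes, giving $\sum_{r=1}^{n-\ell-1}2^{r}=2^{n-\ell}-2$. For $2$: one combines the $1$ from the leading factor, the $\sum_{s=0}^{\ell}(2^{s}-1)=2^{\ell+1}-\ell-2$ from the top block, and $\sum_{r=1}^{n-\ell-1}(2^{r}\ell+2^{\ell+r})=\ell(2^{n-\ell}-2)+2^{n}-2^{\ell+1}$ from $\dagger$, getting a total of $2^{n}+\ell 2^{n-\ell}-3\ell-1$; after subtracting the $2^{n}-1$ that merges into $(2d)^{2^{n}-1}$, the residue is $\ell 2^{n-\ell}-3\ell$, matching the stated expression.

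No conceptual obstacle is anticipated; the proof is essentially bookkeeping of two geometric sums. The one point I would pause to justify is that the ``$+1$'' terms discarded from the stack-size factors $2d_i+1$, together with the sector-cylinder contributions $(d_{n-s}+1)(\dagger)$ arising at each reduced lift, are indeed of strictly lower order in $d$, $m$, or $n$ than the displayed monomial $(2d)^{2^{n}-1}m^{2^{n-\ell}-2}2^{\ell 2^{n-\ell}-3\ell}$. This runs exactly parallel to the corresponding check implicit in the proof of Theorem \ref{thm:Complexity1}, and can be disposed of in a single remark before the exponent accounting.
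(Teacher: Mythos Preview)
Your proposal is correct and follows exactly the paper's approach: the paragraph preceding the theorem already isolates the dominant term as $2\,d_n d_{n-1}\cdots d_{n-\ell}\,(\dagger)$, and the paper's ``proof'' is simply the phrase ``This evaluates to give the following result.'' You have supplied the exponent bookkeeping that the paper leaves implicit, and your computations for the $d$-, $m$-, and $2$-exponents are all correct (note in particular that in each factor of $\dagger$ the leading $2$ cancels against the $-1$ in $2^{2^{\ell+r}-1}$, which is why your sum $\sum_r(2^r\ell+2^{\ell+r})$ is right).
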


The bound in Theorem \ref{thm:Complexity2} is strictly less than the one in Theorem \ref{thm:Complexity1}. The double exponent of $m$ has decreased by the number of ECs; the result of the improved projection in (\ref{eq:DominantTerm:P2}).  Improved lifting reduced the single exponents further still.

\begin{table}[t]
\vskip-10pt
\caption{Projection under operator (\ref{eq:P}).}  \label{tab:P}
\begin{center}
\begin{tabular}{cccc}
Variables & Number  & Degree                  \\
\hline
$n$       & $m$       & $d$                   \\
$n-1$     & $2m^2$    & $2d^2$                \\
$n-2$     & $4m^4$    & $8d^4$                \\
\vdots    & \vdots    & \vdots                \\
$n-r$     & $2^{2^{r-1}}m^{2^{r}}$   & $2^{2^r-1}d^{2^r}$ \\
\vdots    & \vdots                   & \vdots \\
1         & $2^{2^{n-2}}m^{2^{n-1}}$ & $2^{2^{n-1}-1}d^{2^{n-1}}$
\end{tabular}
\end{center}
\end{table}

\begin{table}[t]
\vskip-20pt
\caption{Projection with (\ref{eq:ECProjStar}) $\ell$ times and then (\ref{eq:P}).}  \label{tab:P2}
\begin{center}
\begin{tabular}{cccc}
Variables    & Number    & Degree                   \\
\hline
$n$          & $m$       & $d$                      \\
$n-1$        & $2m$      & $2d^2$                   \\
\vdots       & \vdots    & \vdots                   \\
$n-\ell$     & $2^{\ell}m$    & $2^{2^{\ell}-1}d^{2^{\ell}}$     \\
$n-(\ell+1)$ & $2^{2\ell}m^2$ & $2^{2^{\ell+1}-1}d^{2^{\ell+1}}$ \\
\vdots       & \vdots                      & \vdots                           \\
$n-(\ell+r)$ & $2^{2^{r}\ell}m^{2^{r}}$    & $2^{2^{\ell+r}-1}d^{2^{\ell+r}}$ \\
\vdots       & \vdots                                   & \vdots              \\
1            & $2^{2^{(n-1-\ell)}\ell}m^{2^{n-1-\ell}}$ & $2^{2^{n-1}-1}d^{2^{n-1}}$ 
\end{tabular}
\end{center}
\vskip-20pt
\end{table}

\section{Conclusions and future work}
\label{SEC:Conclusions}

We have explained how the existing theory for CAD projection using ECs can also be leveraged for significant savings in the lifting phase.  We can reduce both the projection polynomials used for lifting and the cells over which stacks are generated.  We have formalised these ideas in Algorithm \ref{alg:ECM}, verified their use in Theorem \ref{thm:Alg}, and demonstrated the benefit with a worked example and complexity analysis.

A key question is how to best deal with non-primitive ECs?  Consider 
$
\phi := zy = 0 \land \varphi.
$
Under ordering $\dots \succ z \succ y \succ \dots$ the EC $zy=0$ is not primitive, so Algorithm \ref{alg:ECM} cannot use it.  We may be tempted to take $E = \{z\}$ as the primitive part, project with operator (\ref{eq:ECProj}) and include the content $y$ in the first projection.  The CAD of $(y, \dots)$-space would be sign-invariant for $y$ and thus the CAD of $(z,y,\dots)$-space truth invariant for the EC (over admissible cells).  But we can no longer say only sections are admissible for the next lift as there may be cells with $z \neq 0$ and $y=0$.  We could instead lift over all cells.  Alternatively we might rewrite $\phi$ as
$
\phi := (z=0 \land \varphi) \lor (y=0 \land \varphi),
$
so each clause has its own EC. 
The theory of truth-table invariant CADs \cite{BDEMW13, BDEMW15} is designed to deal with such input, but would require its own extension to use beyond the first projection.  Of course, this extension would also be valuable in its own right.

\subsection*{Acknowledgements}

Thanks to the the referees for their helpful comments.
This work was supported by EPSRC grant: EP/J003247/1.

\bibliographystyle{plain}
\bibliography{CAD}

\end{document}